\DeclareMathOperator\disc{disc}
\newcommand {\Z}	  {\mathbb{Z}}
\newcommand {\Q}	  {\mathbb{Q}}
\newcommand {\R}	  {\mathbb{R}}
\newcommand {\N}	  {\mathbb{N}}
\newcommand {\ie}  {\textit{i.e.}}
\newcommand {\eg}  {\textit{e.g.}}
\newcommand {\etal} {\textit{et al.}}
\newcommand{\ocp}{\mathrm{ocp}}
\newcommand{\MVS}{{\tt MVS}\xspace}
\newcommand{\MVD}{{\tt MSD}\xspace}
\newcommand{\OCP}{{\tt OCP}\xspace}
\newcommand{\vol}{\operatorname{vol}}
\newcommand\A{\mathscr{A}}
\newcommand{\Dm}{\Delta_{\mathrm{max}}}
\newcommand{\Do}{\Delta_{\mathrm{out}}}
\newcommand{\E}{\mathscr{E}}
\newcommand{\K}{\mathcal{K}}
\newcommand{\B}{\mathscr{B}_d}
\renewcommand{\epsilon}{\varepsilon}
\renewcommand{\leq}{\leqslant}
\renewcommand{\geq}{\geqslant}
\theoremstyle{plain}
\newtheorem{theorem}{Theorem}
\newtheorem{lemma}[theorem]{Lemma}
\newtheorem{corollary}[theorem]{Corollary}
\newtheorem{remark}{Remark}
\theoremstyle{definition}
\providecommand{\conv}{\mathrm{conv}}
\title{On largest volume simplices and  sub-determinants}
\author{Marco Di Summa\thanks{Dipartimento di Matematica, Universit\`a degli Studi di
      Padova (Italy). \texttt{disumma@math.unipd.it}}
  \and
  Friedrich Eisenbrand\thanks{Ecole Polytechnique F\'ed\'erale de Lausanne (Switzerland). \texttt{friedrich.eisenbrand@epfl.ch, yuri.faenza@epfl.ch, carsten.moldenhauer@epfl.ch}}
  \and
  Yuri Faenza\footnotemark[2]
  \and
  Carsten Moldenhauer\footnotemark[2]
}
\date{\today }
\begin{document}
\maketitle

\begin{abstract}
 \noindent
 We show that the problem of finding the simplex of largest volume in the convex hull of $n$ points in $\Q^d$  can be approximated with a
 factor of $O(\log d)^{d/2}$ in polynomial time. This improves upon
 the previously best known approximation guarantee of $d^{(d-1)/2}$ by
 Khachiyan.

 On the other hand, we show that there exists a constant $c>1$ such
 that this problem cannot be approximated with a factor of $c^d$, unless $P=NP$.
 Our hardness result holds even if $n = O(d)$, in which case there exists a
 $\bar c\,^{d}$-approximation algorithm that relies on recent sampling
 techniques, where $\bar c$ is again a constant.

We show that similar results hold for the problem of finding the
 largest absolute value of a subdeterminant of a $d\times n$ matrix.
\end{abstract}


\section{Introduction}

Many techniques in convex geometry begin with approximating a
geometric shape by a simpler one. The maximum volume ellipsoid, or
\emph{John ellipsoid} (see, \eg,~\cite{GroetschelLovaszSchrijver88,matousek02}),
for example, is a prominent such simplification with  many applications in discrete and continuous
optimization. 

\medskip
Simplices are, next to ellipsoids, among the most primitive convex
sets.  We are interested here in  the problem of approximating a given
$V$-polytope by a contained simplex of largest volume.  More
precisely, we investigate the approximability and hardness of the
following problem.

\begin{framed}
Maximum Volume Simplex (\MVS)
\begin{quote}
  Given $n$ points $a_1, \dots, a_n \in \Q^d$, find a simplex
  of maximum volume that is contained in the convex hull $\conv\{a_1,\dots,a_n\}$  of these points.
\end{quote}
\end{framed}
%
\noindent
We assume here, without   loss of generality, that the convex hull of the
points $a_1,\dots,a_n$ is full-dimensional.
As is the case for ellipsoids, the largest volume simplex  in a convex body
has  attracted a lot of attention in the computer science and optimization
literature, see, \eg,~\cite{khachiyan95,gritzmann94,gritzmann95,brieden00,packer02,packer04}.

\medskip
 The volume $\vol(\Sigma)$ of a full-dimensional simplex
$\Sigma =\conv\{v_0,v_1,v_2,\dots,v_d \} \subseteq \R^d$ is
\begin{displaymath}
\vol(\Sigma)=  \frac{| \det(A) |}{d!},
\end{displaymath}
where $A$ is the matrix with columns $v_1 - v_0, v_2 - v_0, \dots, v_d
- v_0$.
Thus \MVS can be reduced to $n$ instances of the problem of finding the largest absolute value
of a $d \times d$ subdeterminant of a  $d\times (n-1)$ matrix. This
motivates the second problem that is central to our study.

\begin{framed}
Maximum  Subdeterminant (\MVD)
\begin{quote}
  Given a matrix $A \in \Q^{d \times n}$ of full row-rank, determine a basis $B \subseteq \{1,\dots,n\}$ of $A$ for which  $|\det(A_B)|$ is maximum.
\end{quote}
\end{framed}
\noindent
Here a \emph{basis} of a $d\times n$ matrix $A$ is a maximal
subset of the column indices $B \subseteq \{1,\dots,n\}$ such that the
corresponding columns are linearly independent, and $A_B$ is the matrix
consisting of the columns indexed by $B$. Khachiyan~\cite{khachiyan95}
has shown that there exists a $((1 + \epsilon)\cdot d)^{(d-1)/2}$
approximation algorithm for \MVD, and thus also for \MVS, with
running time polynomial in $n,d$ and $1 / \epsilon$.


\medskip
\noindent
Our  main contributions are as
follows.
\begin{enumerate}[(i)]
\item We show that there exists an algorithm for \MVS, with running time polynomial in $n,d$ and $1/
  \epsilon$, that computes a simplex $\Sigma$ with
  \begin{displaymath}
    \vol(\Sigma) \cdot \Big( e \ln\big( ( 1+\epsilon)  d\big) \Big)^{d/2} \geq \mathrm{Opt},
  \end{displaymath}
  where $\mathrm{Opt}$ denotes the maximum volume of a simplex
  contained in $\conv\{a_1,\dots,a_n\}$. To achieve this, we
  significantly tighten the analysis of Khachiyan's
  algorithm~\cite{khachiyan95}.
	We also show that our
  analysis is essentially tight by describing instances where the
  approximation ratio of Khachiyan's algorithm is $(\alpha \ln(d))^{d/2}$ with $\alpha \ge
  0.748$. \label{item:6}
\item We show that there exists a constant $c>1$ such that \MVS cannot be
  approximated within a factor of $c^d$, unless $P = NP$. This improves
   the previous best $1.09$-inapproximability of
  Packer~\cite{packer04}. \label{item:7}
\item Our hardness result~\eqref{item:7} holds for instances with $n =
  \Theta(d)$. A recent sampling technique~\cite{deshpande10} immediately
  yields a $\bar c\,^d$-approximation for such instances (for another constant $\bar c$),
  showing that the  hardness is essentially tight in this case.  \label{item:8}
\item These  results~\eqref{item:6}, \eqref{item:7} and~\eqref{item:8} also hold for \MVD. \label{item:9}
\end{enumerate}
%
%
%


\subsection{Related work}
\label{sec:related-work}

The literature on topics related to \MVS and \MVD is extensive. In order to put our results in perspective, we
provide an overview of a selection of related papers.

\subsubsection*{Approximating convex bodies}
Brieden, Gritzmann and Klee~\cite{brieden00} have shown that one can
compute a simplex $\Sigma$ in a convex body with $\vol(\Sigma) (d+1)^d
\geq \mathrm{Opt}$ if the convex body is equipped with a weak
separation oracle.  This is similar to the problem of computing a
maximum volume ellipsoid.  Computing the John-ellipsoid is in general
NP-hard, even if $K$ is a $V$-polytope, \ie, a polytope represented
by its vertices. However, one can compute an approximation of the
John-ellipsoid in polynomial time. Gr\"otschel, Lov\'asz and
Schrijver~\cite{GroetschelLovaszSchrijver88} have shown that, if a
convex set $K\subseteq \R^d$ is given by a weak separation oracle,
then one can compute in polynomial time an ellipsoid in $K$ that, if
scaled by a factor of roughly $d^{3/2}$, contains $K$. 
Thus there is an
approximation algorithm for the problem of computing a maximum volume
ellipsoid in convex sets with an approximation guarantee of
$d^{3d/2}$.
When $K$ is an $H$-polytope (\ie, a polytope described through a system of linear inequalities), then a nearly optimal algorithm is known
~\cite{khachiyan93}. 

Variants where the dimension of the solution can be restricted, \eg~finding a maximum volume $j$-dimensional
simplex, have been considered for \MVS and \MVD~\cite{gritzmann95,brieden00,packer04,civril09}.

\subsubsection*{Hardness of approximation}

Packer~\cite{packer04} has shown that \MVS is inapproximable within a
constant factor smaller than $1.09$.
This implies the same hardness for \MVD, which was shown earlier to be NP-hard by Papadimitriou~\cite{papadimitriou84}.

Koutis~\cite{koutis06} considered
the problem of finding the maximum volume $j$-dimensional simplex in a $V$-polytope.
Note that one obtains problem \MVS when $j=d$.
\c{C}ivril and Magdon-Ismail~\cite{civril13} considered the following
problem that is related to \MVD. Given a matrix $A \in \Q^{d\times n}$
and an integer~$j$, select a subset $J \subseteq \{1,\dots,n\}$ of cardinality~$j$
such that $\sqrt{\det(A_J^TA_J)}$ is maximized.
Note that one obtains problem \MVD when $j=d$.

In both cases, the authors show
that there exist a constant $c > 1$ and a function $j(d)$ such that it is
NP-hard to approximate the respective problem with factor less than
$c^{j(d)}$. Here, $j(d)$ is linear in $d$, but with constant
dependence strictly less than one, thus these results do not cover the case $j=d$.


\subsubsection*{Subdeterminants in optimization and combinatorics}
\label{sec:extr-determ-sub}

An integer matrix $A \in \Z^{d\times n}$ is \emph{totally unimodular}
if the largest absolute value $\Delta_k$ of a $k\times k$
subdeterminant of $A$ is at most one for each $k \in \{1,\dots,d\}$. This
is the case if and only if the optimum value of \MVD is one for the matrix $(A
\mid I_d)$, where $I_d$ is the identity matrix of size $d\times d$. Seymour~\cite{MR579077} provided a polynomial-time
algorithm that tests whether a matrix is totally
unimodular. \emph{Integer programs} $\max\{c^Tx \colon Ax \leq b, \,
x\geq 0,\, x \in \Z^n\}$ defined by totally unimodular matrices $A \in
\Z^{d\times n}$ can be solved in polynomial time.  Constraint matrices
with small subdeterminant also play an important role in convex
(integer) optimization~\cite{hochbaum1990convex}.

Subdeterminants are also fundamental in \emph{discrepancy
  theory}. The \emph{discrepancy} of a matrix $A \in
\R^{n\times d}$ is defined as $\disc(A) = \max_{x \in \{- 1,
  1\}^d}\|Ax \|_\infty$, see,
\eg,~\cite{matousek02,chazelle2000discrepancy}. The \emph{hereditary
  discrepancy} of $A$ is $\max_{S \subseteq [d]} \disc(A_S)$.  Very
recently there have been several breakthroughs in the field of
approximation algorithms related to
discrepancy. Bansal~\cite{bansal2010constructive} has shown how to
find a coloring that respects Spencer's bound~\cite{spencer1985six}.
The concept of hereditary discrepancy is closely related to \emph{LP rounding}~\cite{lovasz1986discrepancy}
and important in the area of approximation algorithms.
Rothvo\ss~\cite{rothvoss2013approximating} recently improved the long-standing
$O((\log n)^2)$ additive error of Karmarkar and
Karp~\cite{KarmakarKarp82} using techniques from discrepancy theory.

The \emph{subdeterminant lower bound for hereditary discrepancy is}
$\max_k \sqrt[k]{\Delta_k}$. Recently
Matousek~\cite{matouvsek2013determinant} has shown that the
subdeterminant bound is tight up to a polynomial factor in $\log d $
and $\log n$. In a recent series of papers by Nikolov \etal~\cite{nikolov2013geometry,nikolov2013approximating} it was shown
how to approximate the hereditary discrepancy, and thus the subdeterminant bound $\max_k
\sqrt[k]{\Delta_k}$, within a polynomial factor in $\log d $ and $\log
n$. Our result provides
a $O(\log d)$-approximation to $\sqrt[d]{\Delta_d}$. It is  an
interesting problem whether a polynomial time approximation algorithm
for the subdeterminant bound with a guarantee that is polynomial in
$\log d$ exists.


\section{A tight analysis of Khachiyan's algorithm}
\label{sec:appr-algor-mvd}

We now  come to the main algorithmic result of our paper and  show the following theorem. Recall that $A_B$ is the matrix corresponding to the columns of $A$ indexed by $B$. We denote the set of bases of $A$ by  $\mathscr{B}$.
\begin{theorem}
  \label{thr:2}
  There exists an algorithm that given a matrix $A \in \Q^{d\times n}$ and $\epsilon > 0$, identifies a basis $B\subseteq\{1,\dots,n\}$ such that

  \begin{displaymath}
|\det(A_B)| \cdot  \Big( e \ln\big( ( 1+\epsilon)  d\big) \Big)^{d/2} \geq \max_{B' \in \mathscr{B}} |\det(A_{B'})|.
  \end{displaymath}
 The algorithm runs in time polynomial in $n, d$, and $1/\epsilon$.
Thus \MVD and \MVS can be approximated within the factor above.
\end{theorem}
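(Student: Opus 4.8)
The plan is to analyze Khachiyan's exchange algorithm for \MVD and to transfer the bound to \MVS through the reduction already sketched. A largest simplex inscribed in $\conv\{a_1,\dots,a_n\}$ can be taken to have all of its vertices among the $a_i$; fixing one vertex $a_{i_0}$ and letting $M_{i_0}$ be the $d\times(n-1)$ matrix with columns $a_i-a_{i_0}$ ($i\neq i_0$), finding the best such simplex amounts to solving \MVD on each of the $n$ matrices $M_{i_0}$ and keeping the best answer, so any approximation factor for \MVD carries over verbatim to \MVS. For \MVD the algorithm maintains a basis $B$ (initialized, say, greedily by repeatedly adding a column of largest residual length) and performs an exchange $B\leftarrow(B\setminus\{i\})\cup\{j\}$ whenever this multiplies $|\det A_B|$ by more than $1+\epsilon$, halting at a \emph{$(1+\epsilon)$-local optimum}. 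Polynomial running time is a potential argument: throughout, $|\det A_B|$ stays between $2^{-O(L)}$ and $2^{O(L)}$ (the upper bound is Hadamard's inequality applied to $A$, with $L$ the bit length of the input), and each step raises it by a factor exceeding $1+\epsilon$, so there are $O(L/\epsilon)$ steps, each requiring a polynomial number of arithmetic operations (essentially one computation of $A_B^{-1}A$ and a scan of its entries).

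Fix the returned basis $B$ and normalize $A_B=I_d$, so that every entry of $A_B^{-1}A$ is at most $1+\epsilon$ in absolute value. A standard exchange identity shows that, for the optimal basis $B^{\star}$, the ratio $\max_{B'}|\det A_{B'}|/|\det A_B|$ equals $|\det D|$ for some $k\times k$ submatrix $D$ of $A_B^{-1}A$ with $k\leq d$ and $|D_{ij}|\leq 1+\epsilon$. Khachiyan bounds $|\det D|$ by Hadamard's inequality, $|\det D|\leq\big((1+\epsilon)\sqrt{k}\big)^{k}\leq\big((1+\epsilon)d\big)^{d/2}$ (his bookkeeping gives the exponent $(d-1)/2$), and the entire theorem consists in improving this one estimate.

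The crux is therefore a finer bound on $|\det D|$. Writing $|\det D|=\prod_{i=1}^{k}h_i$ for the Gram--Schmidt lengths (the distance from the $i$-th column of $D$ to the span of the earlier ones), each $h_i$ is at most $(1+\epsilon)\sqrt{d}$ -- the only fact Hadamard's bound uses. The improvement comes from proving a dichotomy: at most about $d/\ln\big((1+\epsilon)d\big)$ of the $h_i$ can be anywhere near this maximum, while every other $h_i$ is $O\big(\sqrt{\ln((1+\epsilon)d)}\big)$. The heuristic reason is that a long Gram--Schmidt direction corresponds to a column of $A_B^{-1}A$ that is long and nearly orthogonal to the others; if too many such columns were present, an exchange (respectively, the basis the algorithm was started from) would already exhibit a strictly larger determinant, contradicting local optimality. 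Since $\big((1+\epsilon)\sqrt{d}\big)^{\,d/\ln((1+\epsilon)d)}=e^{\,(1+o(1))d/2}$, multiplying the two groups of factors yields $|\det D|\leq\big(e\ln((1+\epsilon)d)\big)^{d/2}$ (using $e\ln((1+\epsilon)d)\geq 1$ to also cover the case $k<d$). The main obstacle is to establish this dichotomy with the right constants -- they are exactly what produce the factor $e$, and keep the bound at $\ln d$ rather than a larger multiple of it -- and to confirm that the algorithm's output really has the structure the dichotomy requires. The companion lower bound, an explicit matrix on which the algorithm halts at a basis realizing a ratio $(\alpha\ln d)^{d/2}$ with $\alpha\geq 0.748$, then shows that the $\ln d$ factor is unavoidable and that only the constant multiplying it remains open.
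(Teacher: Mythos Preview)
Your proposal has a genuine gap that cannot be closed along the route you sketch.

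First, the algorithm you analyze is not the one the paper analyzes. The paper does \emph{not} use an exchange / local-search algorithm; it first \emph{rounds} the symmetric polytope so that $\B\subseteq\A\subseteq\sqrt{(1+\epsilon)d}\,\B$, and then runs a single greedy pass (repeatedly pick the longest column, project to its orthogonal complement). The rounding guarantees $\rho_i\geq 1$ for all picked lengths and hence $\rho_1/\rho_d\leq\sqrt{(1+\epsilon)d}$; this bounded dynamic range is what the whole improved analysis rests on, and nothing in your write-up provides a substitute for it.

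Second, and more seriously, the ``dichotomy'' you rely on is not merely unproved, it is false under the only hypothesis your analysis secures. Local $(1+\epsilon)$-optimality of $B$ is equivalent to the entry bound $|(A_B^{-1}A)_{ij}|\leq 1+\epsilon$, and nothing more. Take $A=[\,I_d\mid H\,]$ with $H$ a $\{\pm1\}$ Hadamard matrix: the identity basis is $1$-locally optimal (every single swap changes $|\det|$ by a factor $1$), yet $|\det A_{B^\star}|/|\det A_B|=d^{d/2}$, and in the corresponding $D=H$ \emph{all} Gram--Schmidt lengths equal $\sqrt{d}$, directly contradicting your claim that only $O(d/\ln d)$ of them can be near $\sqrt{d}$. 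Thus the exchange algorithm can genuinely be $d^{d/2}$ bad, and the entry bound alone cannot beat Hadamard. Your parenthetical hope that the greedy initialization rules out such bases may be true, but then the burden shifts entirely to analyzing the greedy step, and you have not done that.

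The paper's actual argument is quite different. After rounding, it buckets the greedy lengths $\rho_1\geq\cdots\geq\rho_d$ into at most $t\leq\ln((1+\epsilon)d)$ geometric groups with ratio $\sqrt{e}$, and builds an ellipsoid $\E$ whose principal axes are the picked (orthogonal) directions with radii $r_j/\sqrt{e}$ on group $G_j$. The greedy choice implies every input column lies in $\sqrt{e\,t}\cdot\E$, and a short linear-algebra lemma (Lemma~1) turns this into $\Delta_{\max}\leq (e\,t)^{d/2}\,\rho_1\cdots\rho_d$. No dichotomy on Gram--Schmidt lengths of $A_B^{-1}A_{B^\star}$ is needed or claimed.
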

%
%
%
%
We first review
Khachiyan's algorithm~\cite{khachiyan95} for \MVS and his analysis. 
Suppose we are given a matrix $A \in \Q^{d\times
  n}$ of full row rank whose columns are $a_1,\dots,a_n$ respectively. Consider the symmetric polytope
\begin{displaymath}
  \A = \conv\{ \pm a_1,\dots,\pm a_n\}.
\end{displaymath}
The largest $d\times d$ subdeterminant of $A$, in absolute value,
corresponds to the largest volume simplex in $\A$ with one vertex being
the origin. The algorithm begins by \emph{rounding} the polytope~$\A$.


Khachiyan~\cite{khachiyan96} showed that, if $\mathcal K$ is a
symmetric  $V$-polytope that is explicitly given by its vertices, then it is
possible to compute an ellipsoid $\E$ such that $\E\subseteq \K
\subseteq \sqrt{(1+\epsilon)d}\ \E$ in time polynomial in the number
of vertices of $\mathcal K$, $d$ and $1 / \epsilon$. This applies to the polytope $\A$.

The rounding step is now as follows. Compute an approximation of the
ellipsoid $\E$ with  $\E\subseteq \A
\subseteq \sqrt{(1+\epsilon)d}\ \E$.
 Now, there exists a non-singular
matrix $T \in \R^{d \times d}$ such that the image of $\E$ is the
$d$-dimensional unit ball $\B$. The rounded instance of \MVD is then $T
\cdot A$. Clearly, this transformation is approximation preserving, since $\det((T\cdot A)_B) = \det(T) \cdot \det(A_B)$
for every basis $B$ of $A$.
Assume we have performed this rounding step. Then
\begin{displaymath}
  \B \subseteq \A \subseteq  \sqrt{(1+\epsilon) d}\, \B
\end{displaymath}
holds. 
From there, the algorithm  proceeds in a greedy fashion, see Figure~\ref{fig:kha}.

\begin{figure}[h]
	\smallskip\hrule\smallskip
	Input: Matrix $A\in\Q^{d\times n}$ with columns $a_1,\dots,a_n$ and $\epsilon>0$.
	    \smallskip\hrule\smallskip
	\begin{enumerate}[1.]
		\item Round the instance such that $\B \subseteq\mathcal{A}\subseteq\sqrt{(1+\epsilon)d}\,\B$,
			where $\A = \conv\{ \pm a_1,\dots,\pm a_n\}$ and $\B$ is the $d$-dimensional unit ball.
        \item For $i=1,\dots,d$:
                        	\begin{enumerate}[2.1.]
			\item Pick $v_i$ as the vector from $\{a_1,\dots,a_n\}$ with largest norm;
			\item Replace vectors $a_1,\dots,  a_n$ with their projections onto the orthogonal
                          complement of $v_i$.
		\end{enumerate}
		\item Return the original vectors corresponding to $\{ v_1,\dots, v_d\}$.
	\end{enumerate}
\hrule
\caption{Khachiyan's algorithm for \MVD.}
\label{fig:kha}
\end{figure}
%

Note that
$v_1,\dots,v_d$ correspond to the Gram-Schmidt orthogonalization of
the vectors returned by the greedy procedure. Thus, after rounding, the
absolute value of the determinant induced by the chosen vectors is
$\|v_1\|\dots\|v_d\|$.  

\subsection{Khachiyan's analysis}
\label{sec:khachiyans-analysis}

We now review Khachiyan's analysis showing that his algorithm is a
factor $((1+\epsilon)d)^{(d-1)/2}$ approximation algorithm for \MVD.
Let us denote the Euclidean lengths of the picked vectors $v_i$ by
$\rho_i = \lVert v_i \rVert$ for $i=1,\dots,d$, and define
$\Delta_{\max}=\max_{B \in \mathscr{B}}|\det(A_B)|$.  The claimed
bound follows from the following two facts:
\begin{enumerate}[(i)]
\item $\rho_1^d \geq \Dm$,  \label{item:1}
\item $\rho_i \geq 1$ for $i=1,\dots,d$.  \label{item:2}
\end{enumerate}
Fact~\eqref{item:1} is the well known \emph{Hadamard bound}, see,
\eg,~\cite{Schrijver98}, while \eqref{item:2} follows from the fact
that the (lower dimensional) unit ball continues to be included in the
convex hull of the projection of the input vectors to a
lower-dimensional space.  The output of the algorithm is the
subdeterminant $\Do = \rho_1\cdots \rho_d$.  By
combining~\eqref{item:1} and~\eqref{item:2} with $\rho_1 \leq
\sqrt{(1+\epsilon)d}$ one obtains $\Do \cdot
\left( \sqrt{(1+\epsilon)d} \right)^{d-1} \geq \Dm$, which is the
claimed approximation ratio.

\begin{theorem}[Khachiyan~\cite{khachiyan95}]
  \label{thr:1}
  There is a polynomial-time $((1+\epsilon)d)^{(d-1)/2}$-approximation algorithm for problems \MVD and \MVS.
\end{theorem}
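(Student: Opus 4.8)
The plan is to prove the statement for \MVD and to transfer it to \MVS essentially for free. For the transfer, recall that a full-dimensional simplex $\conv\{v_0,\dots,v_d\}\subseteq\R^d$ has volume $|\det A|/d!$, where $A$ has columns $v_1-v_0,\dots,v_d-v_0$; moreover, the vertices of a largest-volume simplex inscribed in $\conv\{a_1,\dots,a_n\}$ can be taken to lie among the $a_i$ (fixing all but one vertex makes the volume the absolute value of an affine function of the free vertex, hence maximized at an input point, and one iterates). So an optimal \MVS instance reduces to choosing a distinguished vertex $v_0=a_k$ and then solving \MVD on the $d\times(n-1)$ matrix with columns $a_i-a_k$, $i\ne k$; running a $\gamma$-approximation for \MVD on each of these $n$ matrices and returning the best simplex gives a $\gamma$-approximation for \MVS. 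It therefore suffices to analyse the greedy algorithm of Figure~\ref{fig:kha}.

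First I would justify the rounding step. By Khachiyan's result on symmetric $V$-polytopes~\cite{khachiyan96}, one computes in time polynomial in $n,d,1/\epsilon$ an ellipsoid $\E$ with $\E\subseteq\A\subseteq\sqrt{(1+\epsilon)d}\,\E$, where $\A=\conv\{\pm a_1,\dots,\pm a_n\}$. Choosing a non-singular $T$ mapping $\E$ onto the unit ball $\B$ and replacing $A$ by $TA$ multiplies every subdeterminant by the fixed nonzero factor $\det(T)$, so the approximation ratio is unchanged, and after this step $\B\subseteq\A\subseteq\sqrt{(1+\epsilon)d}\,\B$. Next I would record that in the rounded instance the columns returned by the greedy procedure have $v_1,\dots,v_d$ as their Gram--Schmidt orthogonalisation, so the absolute value of the subdeterminant it outputs equals $\Do=\rho_1\cdots\rho_d$ with $\rho_i=\lVert v_i\rVert$.

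The heart of the argument is the pair of inequalities $\rho_1^{\,d}\ge\Dm$ and $\rho_i\ge 1$ for all $i$. The first is Hadamard's inequality: for any basis $B$, $|\det(A_B)|\le\prod_{j\in B}\lVert a_j\rVert\le\rho_1^{\,d}$, since at the first iteration step~2.1 chooses $\rho_1=\max_j\lVert a_j\rVert$; applying this to an optimal basis gives $\Dm\le\rho_1^{\,d}$. For the second, after $i-1$ iterations the current vectors are the orthogonal projections of the original $a_j$ onto the orthogonal complement $W_i$ of $\mathrm{span}\{v_1,\dots,v_{i-1}\}$; the same orthogonal projection maps $\B$ onto the full-dimensional unit ball of $W_i$ and, being linear, carries the inclusion $\B\subseteq\A$ to ``unit ball of $W_i$ $\subseteq$ $\conv$ of the projected $\pm a_j$''. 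A convex body containing the unit ball of $W_i$ has a point of norm $\ge 1$, and since that body is a polytope whose vertices are among the projected $\pm a_j$, some projected $a_j$ has norm $\ge 1$; as step~2.1 picks the longest one, $\rho_i\ge 1$.

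Finally I would combine these facts with $\rho_1\le\sqrt{(1+\epsilon)d}$ (which holds because every $a_j$ lies in $\sqrt{(1+\epsilon)d}\,\B$):
\begin{displaymath}
  \Dm\le\rho_1^{\,d}=\rho_1\cdot\rho_1^{\,d-1}\le\big(\rho_1\rho_2\cdots\rho_d\big)\cdot\big(\sqrt{(1+\epsilon)d}\big)^{d-1}=\Do\cdot\big((1+\epsilon)d\big)^{(d-1)/2},
\end{displaymath}
where the middle inequality uses $\rho_i\ge 1$ for $i\ge 2$. Together with the polynomial running time of the rounding step and of the $d$ greedy iterations (each computing $O(n)$ norms and projections in $\R^d$), this proves the theorem. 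The step I expect to require the most care is the proof of $\rho_i\ge 1$: one must check cleanly that orthogonal projection sends the inscribed ball of $\A$ onto the \emph{full-dimensional} unit ball of the current subspace and commutes with taking convex hulls of the generating vectors, so that the inscribed-ball property is genuinely inherited at every iteration.
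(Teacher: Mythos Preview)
Your proposal is correct and follows essentially the same approach as the paper: both derive the ratio from the Hadamard bound $\rho_1^{\,d}\ge\Dm$, the preservation of the inscribed unit ball under orthogonal projection giving $\rho_i\ge1$, and the outer bound $\rho_1\le\sqrt{(1+\epsilon)d}$. You supply more detail on the $\MVS$-to-$\MVD$ reduction and on why $\rho_i\ge1$, but the underlying argument is identical to the paper's.
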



\subsection{Improving the analysis}
\label{sec:an-improved-analysis}

We will now show that the approximation factor 
can in fact  be bounded by $(e \ln( ( 1+\epsilon) d))^{d/2}$. To
do so, we significantly tighten the upper bound~\eqref{item:1}
presented in the analysis above. Key to this improvement is the following observation.

\begin{lemma}
  \label{lem:1}
  Let $a_1,\dots,a_n \in \R^d$ be the input vectors  and let
  $v_1,\dots,v_d$ be the picked vectors in the course of Khachiyan's
  algorithm, with lengths $\rho_1,\dots,\rho_d$ respectively. Let $\E$ be an ellipsoid with the following properties:
  \begin{enumerate}[\upshape(a)]
  \item Each $v_i$ is on a principal axis of $\E$.  \label{item:5}
  \item None of the vectors $v_i$ is contained in the interior of  $\E$. \label{item:3}
  \item There exists $\alpha >0$ such that each $a_i$ is contained in $\alpha \cdot \E$. \label{item:4}
  \end{enumerate}
Then $\Dm \leq \alpha^d \rho_1 \cdots \rho_d$.
\end{lemma}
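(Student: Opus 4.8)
The plan is to reduce everything to Hadamard's inequality after normalising $\E$ to diagonal form. The key point is that the picked vectors $v_1,\dots,v_d$ are pairwise orthogonal, being the Gram--Schmidt vectors produced in the greedy loop (each $v_i$ lives in the orthogonal complement of $v_1,\dots,v_{i-1}$). First I would note that each $\rho_i>0$: otherwise $v_i=0$ lies in the interior of $\E$, violating hypothesis~(b). Hence $e_i \coloneqq v_i/\rho_i$ ($i=1,\dots,d$) form an orthonormal basis of $\R^d$, and since each $v_i$ --- hence each $e_i$ --- is the direction of a principal axis of the (origin-centred) ellipsoid $\E$, in the coordinates associated with $(e_i)$ we may write
\begin{displaymath}
  \E \;=\; \Big\{\, x \in \R^d \;:\; \sum_{i=1}^d x_i^2/\beta_i^2 \le 1 \,\Big\}
\end{displaymath}
for suitable semi-axis lengths $\beta_1,\dots,\beta_d>0$. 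The delicate bit here is that principal axes need not be unique when several $\beta_i$ coincide; but hypothesis~(a) only asks that each $v_i$ sit on \emph{some} principal axis, and that is exactly what this simultaneous diagonalisation of $\E$ and the frame $(e_i)$ encodes.

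Next I would translate the remaining hypotheses into this frame. Writing $M=\diag(\beta_1,\dots,\beta_d)$ (in the basis $(e_i)$), the ellipsoid is $\E = M\,\B$, so $|\det M| = \beta_1\cdots\beta_d$. Hypothesis~(b), applied to $v_i=\rho_i e_i$, gives $\rho_i^2/\beta_i^2 \ge 1$, i.e.\ $\beta_i \le \rho_i$ for all $i$. Hypothesis~(c) says $a_j \in \alpha\,M\,\B$ for every input vector $a_j$, equivalently $\lVert M^{-1}a_j\rVert \le \alpha$.

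Finally I would bound an arbitrary subdeterminant. Fix a basis $B=\{j_1,\dots,j_d\}$ of $A$. Pulling $M$ out of the determinant and applying Hadamard's inequality to the columns $M^{-1}a_{j_1},\dots,M^{-1}a_{j_d}$,
\begin{displaymath}
  |\det A_B| \;=\; |\det M|\cdot\big|\det\big(M^{-1}a_{j_1}\mid\cdots\mid M^{-1}a_{j_d}\big)\big| \;\le\; |\det M|\prod_{k=1}^{d}\lVert M^{-1}a_{j_k}\rVert \;\le\; \Big(\prod_{i=1}^{d}\beta_i\Big)\alpha^d .
\end{displaymath}
Taking the maximum over all bases $B$ and invoking $\beta_i\le\rho_i$ then gives $\Dm \le \alpha^d\prod_{i=1}^d\beta_i \le \alpha^d\,\rho_1\cdots\rho_d$, as required. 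I expect the only genuine work to be the geometric normalisation in the first paragraph --- using orthogonality of the Gram--Schmidt vectors to upgrade hypothesis~(a) to a simultaneous diagonalisation of $\E$ --- while the rest is just the change of variables $x\mapsto M^{-1}x$ and Hadamard's inequality, with no obstacle I can foresee.
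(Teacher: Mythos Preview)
Your proof is correct and follows essentially the same route as the paper's: both map $\E$ to the unit ball by a linear transformation that is diagonal in the frame $(e_i)$, apply Hadamard's inequality there, and bound the semi-axis lengths by the $\rho_i$ via hypothesis~(b). You are slightly more explicit than the paper in invoking the pairwise orthogonality of the Gram--Schmidt vectors to justify the simultaneous diagonalisation of $\E$, a point the paper's proof leaves implicit.
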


\begin{proof}
  We first observe that the largest $d\times d$ subdeterminant of a
  matrix whose columns are in $\E$ is bounded by $\rho_1 \cdots
  \rho_d$. To see this, let $p_i$ be an intersection point of the
  principal axis that includes $v_i$, with the boundary of $\E$. Since
  $v_i$ is not in the interior of $\E$ we have $\|p_i\|\le \rho_i$.
  Let $T$ be the inverse of the matrix with columns $p_1,\dots
  ,p_d$. In fact, $T$ has rows $p_1^T / \|p_1\|^2, \dots, p_d^T/
  \|p_d\|^2$.  The transformation $x \mapsto T\,x$ maps $\E$ to the
  unit ball $\B$. By the Hadamard bound, a selection of $d$ vectors in
  $\B$ has a determinant of at most $1$ in absolute value.  This
  implies that the largest $d\times d$ subdeterminant of a matrix
  whose columns are in $\E$ is bounded by $\|p_1\|\cdots\|p_d\|\le
  \rho_1 \cdots \rho_d$.  Now, since $\alpha \cdot \E$ contains all
  the input vectors, we have $\Dm \leq \alpha^d \rho_1 \cdots \rho_d$.
\end{proof}
\noindent
We can now prove our main result.

\begin{proof}[Proof of  Theorem~\ref{thr:2}]
The theorem follows from the existence of an ellipsoid
   $\E$ satisfying the conditions of  Lemma~\ref{lem:1}  with $\alpha =\sqrt{e
  \ln((1+\epsilon)d)}$.

The vectors $v_1,\dots,v_d$ are pairwise orthogonal due to the projection in Step 2.
Recall that determinants are invariant under rotation and note that the algorithm's execution
is also invariant under rotation. Therefore, by suitably rotating the input,
we can assume without loss of generality
that $v_1,\dots,v_d$ are the vectors $\rho_1 \cdot e_1, \dots, \rho_d \cdot e_d$,
where $e_i$ is the $i$-th unit vector.

Next, let us \emph{group the elements} of the decreasing sequence $\rho_1,\dots,\rho_n$
as follows. Let $G_j$ be the set of indices $i$ such that
$\frac{\rho_1}{\exp((j-1)/2)} \ge \rho_i >
\frac{\rho_1}{\exp(j/2)}$. Let $t$ denote the number of such groups.
Since $\rho_1/\rho_d \le  \sqrt{(1+\epsilon)d}$, we have $t\le
\log_{\exp(1/2)} \sqrt{(1+\epsilon)d} = \ln ((1+\epsilon)d)$.  Assume
that all groups $G_j$ are non-empty (discarding empty groups will
decrease the number of groups and subsequently yield an improved
analysis).  Let $r_j$ be the largest element of $G_j$ and note that
$r_j/\rho_i \le \sqrt e$ for all $\rho_i\in G_j$.

Let us decompose every $x\in\R^d$
into $x =  x^{(1)} + \dots + x^{(t)}$, where $ x^{(1)}\in\R^d$ is equal to $x$ in the first $|G_1|$ components
and zero otherwise, $x^{(2)}$ is equal to $x$ in the next $|G_2|$ components and zero otherwise, and so forth.
Let $\E$ be the ellipsoid
\begin{displaymath}
  \E = \left\{ x \in\R^d\, : \left\|\frac{
        x^{(1)}}{r_1 / \sqrt{e} }\right\|^2+\dots+\left\|\frac{
        x^{(t)}}{r_t / \sqrt{e}}\right\|^2 \le 1 \right\}.
\end{displaymath}
We are done once we have shown that
  the ellipsoid $\E$, together with  $\alpha =\sqrt{e
  \ln((1+\epsilon)d)}$,  satisfies  the conditions \eqref{item:5}--\eqref{item:4} from Lemma~\ref{lem:1}.

	First, the principal axes of $\E$ are the coordinate directions $\{ \lambda \cdot e_i \colon \lambda \in \R\}$,
  which implies \eqref{item:5}.
	Second, suppose that $i \in G_j$ and recall that $v_i = \rho_i \cdot e_i$. Since
  $\rho_i \ge r_j / \sqrt{e} $, it follows that $v_i$ is not in the interior of
  $\E$, which implies \eqref{item:3}. Every input column $a$ satisfies
  $\| a^{(j)} /r_j \| \leq 1$, otherwise the projection of $a$
  would have been picked instead of the vector corresponding to the
  first element of $G_j$. Consequently, each input column $a$ satisfies the constraint
  \begin{displaymath}
    \left\|\frac{a^{(1)}}{r_1 / \sqrt{e}}\right\|^2+\dots+\left\|\frac{a^{(t)}}{r_t / \sqrt{e}}\right\|^2 \le e \cdot t \le e
  \ln((1+\epsilon)d),
  \end{displaymath}
  which implies \eqref{item:4}.

\end{proof}


\subsection{Tightness of the analysis}

We now provide a family of instances of increasing dimension where
Khachiyan's algorithm achieves a ratio of $(\alpha \ln(d))^{d/2}$,
where $\alpha \ge 0.748$ tends to $1$ as $d$ grows. Thus, we basically
match the upper bound on the approximation ratio given in the previous
section.

Let $d\geq 4$ be a power
of two. The instances are matrices with $d$ rows of the form
\begin{equation}
  \label{eq:1}
  A = \left[\begin{matrix} D \quad|& \frac{\sqrt{c^2-1}}{c} D H \quad|& E \end{matrix}\right].
\end{equation}
Here, $D$ is a $d\times d$ diagonal matrix of the form
\begin{displaymath}
  D =
  \begin{pmatrix}
    c^{d-1} &  &  &  & \\
           & c^{d-2}&  &&\\
           & & \ddots & & \\
           & &        & c^0
  \end{pmatrix}
\end{displaymath}
with $c = {d}^\frac{1}{2 (d-1)}$.  Each column of $D$ has Euclidean
norm of at most $\sqrt{d}$.  The matrix $H$ is a \emph{Hadamard}
matrix, \ie, $H\in\{-1,1\}^{d\times d}$ with $| \det (H) | =
d^{d/2}$ (it is well-known that such a matrix certainly exists if $d$ is a power of two).
Also the Euclidean norm of each column of
$\frac{\sqrt{c^2-1}}{c} D H$ is bounded by $\sqrt{d}$. The matrix $E$
is such that the norm of each column is at most $c$ and
the unit ball is contained in $\conv(E)$. Clearly, such a matrix $E$ exists, as $c>1$; see, \eg,~\cite{dudley74} for explicit bounds.

Thus the polytope that is generated by the columns of the
matrix~\eqref{eq:1} and their negatives is ``round'',
in the sense that it contains the unit
ball and it is contained in the unit ball scaled by
$\sqrt{d}$.
We will show below that Khachiyan's algorithm will output a solution of value at most $c \cdot |\det D|$. This implies the claim, as
\begin{displaymath}
\left|\det\left(  \frac{\sqrt{c^2-1}}{c} D H\right)  / (c\det(D))\right| = \left(\frac{c^2-1}{c^2}\right)^{d/2} \frac{d^{d/2}}{c}.
\end{displaymath}
Then using $x-1\ge \ln(x)$, we deduce that the approximation ratio is
\begin{align*}
  d^{-\frac{1}{2(d-1)}} \left(\frac{d^{1/(d-1)}-1}{d^{1/(d-1)}}
    d\right)^{d/2} \ge \left( d^{\frac{d-2}{d-1}-\frac{1}{d(d-1)}}\
    \frac{1}{d-1} \ln(d) \right)^{d/2} \ge \left(\alpha
    \ln(d)\right)^{d/2},
\end{align*}
where $\alpha$ is as required.

Let $w$ be any column vector of $\frac{\sqrt{c^2-1}}{c} D H$. The squared norm of the projection of $w$
into the orthogonal complement of the first $i$ column vectors of $D$ (where $i\in\{0,\dots,d-1\}$) is
\begin{align*}
   \frac{c^2-1}{c^2} \sum_{j=0}^{d-i-1} c^{2j}  = \frac{c^2-1}{c^2}  \cdot \frac{c^{2(d-i)} -1}{c^2-1}
         = \frac{1}{c^2} \cdot (c^{2(d-i)}-1) < c^{2(d-i-1)}.
\end{align*}
The last term is the squared norm of the $(i+1)$-th column vector of $D$. Thus Khachiyan's algorithm outputs the first $d-1$ columns of $D$, and in the last step a column of norm at most $c$ from $E$. The determinant of the column vector selected by Khachiyan's algorithm is then at most $c\times|\det D|$, as claimed above.

\section{Hardness}\label{sec:hardness}

We now consider the hardness of approximating \MVS and \MVD.  As
mentioned in Section~\ref{sec:related-work}, the best
inapproximability result was due to Packer~\cite{packer04}, who proved
that \MVS cannot be approximated with a factor better than $1.09$,
unless $P=NP$.  In this section we provide a drastic improvement
showing that it is NP-hard to approximate \MVS and \MVD with a factor
$c^d$, where $c > 1$ is an explicit constant. In particular, we show
that the result holds for instances where $n=\Theta(d)$. We will also conclude that the hardness result is best possible for such instances.

\smallskip
Our argument is based on the connection between \MVD and the following problem.

\begin{framed}
Odd Cycle Packing (\OCP)
\begin{quote}
  Given a simple undirected graph, find a maximum family of vertex-disjoint odd cycles.
\end{quote}
\end{framed}

In fact, given a graph $G$, let $A_G$ be the node-edge incidence matrix of $G$.
Then, for every odd cycle $C$ of $G$, the square submatrix of $A_G$ with rows corresponding
to the nodes of $C$ and columns corresponding to the edges of $G$ has determinant $\pm2$.
Therefore any collection of $k$ vertex-disjoint odd cycles in $G$ determines a submatrix of $A_G$
whose determinant is $2^k$ in absolute value. This implies that $\Delta_{\max}(A_G)\ge2^{\ocp(G)}$,
where $\ocp(G)$ denotes the optimal value of \OCP on $G$.
Conversely, all non-zero subdeterminants of $A_G$ are
powers of two (in absolute value) and indeed one can show that $\Delta_{\max}(A_G)=2^{\ocp(G)}$ (see, \eg, \cite{grossman95}).

\smallskip
The overall strategy for proving hardness is the following. We first build on a hardness result by Berman and Karpinski~\cite{berman03} on stable sets in $3$-regular graphs and show that \OCP is NP-Hard to approximate with a factor ${\bar c}$, where $\bar c > 1$ is an explicit constant.
Our second step is to reduce \OCP to \MVD, using the construction seen above. Hence the constant inapproximability for \OCP leads to a $c^d$-inapproximability for \MVD. Last, we reduce \MVD to \MVS.

Let us remark that \OCP is NP-hard even when restricted to planar graphs~\cite{hardy05}
and, in that case, allows for constant factor approximations~\cite{fiorini07,kral12}.
Following the hardness for packing the maximum number of disjoint cycles by Friggstad and Salavatipour~\cite{salavatipour11},
we can deduce for \OCP a constant hardness under P$\neq$NP and a hardness of $O\left(\log^{\frac 1 2 - \epsilon} n\right)$
unless $\text{NP} \subseteq \text{ZPTIME}\left(n^{\text{polylog}(n)}\right)$, where $n$ is the number of nodes of the graph.
This result relies on the PCP-theorem. Our construction below yields a weaker hardness for \OCP,
but it does not rely on the PCP-theorem, it leads to an improved hardness result for the vertex-disjoint triangle packing problem, and it is much simpler to use for constructing a hardness for \MVD subsequently.
In particular, it enables us to easily calculate the explicit constants in the hardness for \MVD.

On the positive side, Kawarabayashi and Reed~\cite{KR10} showed that
for general graphs \OCP can be approximated within a factor of $\sqrt
n$.

\subsection{From stable set in $3$-regular graphs to $\OCP$}\label{sec:ss-to-ocp}

We now describe our inapproximability result for \OCP.
We require a result of Berman and Karpinski~\cite{berman03} for maximum stable set on 3-regular graphs.
Given a system of $2n$ linear equations modulo 2 with 3 variables per equation, H{\aa}stad~\cite{hastad97}
showed that it is NP-hard to distinguish between instances where there exists a solution satisfying $(1-\epsilon)2n$
equations and instances where no solution satisfies more than $(1+\epsilon)n$ equations, for any arbitrarily small $\epsilon>0$.
Building on this result, Berman and Karpinski~\cite{berman03} gave a polynomial time construction of a 3-regular graph
on $176n$ vertices that translates $n$ satisfied equations to a maximum stable set of size at most $97n$ and $2n$
satisfied equations to a maximum stable set of size at least $98n$.
Thus, it is NP-Hard to detect if a $3$-regular graph on $176n$ vertices has a stable set of size at least $(1-\epsilon)98n$, or at most $(1+\epsilon)97n$, for each $\epsilon >0$.

Let $G=(V,E)$ be the graph constructed in~\cite{berman03}. Intuitively, we would like to construct a new graph $H$ such
that every vertex in $G$ corresponds to a triangle in $H$ and that a stable set in $G$ also corresponds to a packing of triangles in $H$.
A first candidate for such a graph $H$ would be the line graph of $G$ (recall that $G$ is $3$-regular), but in the line graph
we might also create triangles that do not correspond to vertices in $G$.

We solve this issue by slightly changing $G$. Subdivide every edge in $G$ twice, \ie, substitute an edge $\{u,v\}$ by
a path $P_{uv} = u,p_1,p_2,v$. Let $G'$ be the obtained graph.
Since $G$ has $176n$ vertices, hence $\frac{3}{2}176n = 264n$ edges, $G'$ has $176n+2 \cdot 264n = 704n$ vertices
and $3 \cdot 264n = 792n$ edges. Note that $G'$ is triangle-free.

We now prove that every subdivision of an edge in $G$ augments the stable sets by exactly one vertex.
Consider a stable set $S$ in $G$. By choosing $p_2$ when $u\in S$ or $p_1$ otherwise we obtain an induced stable set
in $G'$ of size $|S|+264n$. Conversely, let $S'$ be a stable set in $G'$. Modify $S'$ such that for each $P_{uv}$ not
both $u$ and $v$ are in $S'$: if both are in $S'$, then the stable set $S'\setminus \{v\}\cup\{p_2\}$
has the same cardinality and only includes one of $\{u,v\}$.
Then we can obtain a stable set in $G$ of size at most $|S'|-264n$.
In particular, a stable set of size $97n$ (resp., $98n$) in $G$ translates into a stable set of size $97n+264n=361n$ (resp., $362n$) in $G'$.


Now let $H$ be constructed as follows: starting from the line graph of $G'$, for each $P_{uv}$ add two vertices and
connect them as to obtain the graph $H_{uv}$ depicted in Figure~\ref{fig:edge_dual}.
The number of vertices in $H$ is $792n + 2 \cdot 264n = 1320n$ and the number of edges is $\frac{1}{2}(4 \cdot 792n + 2 \cdot 2 \cdot 264n) = 2112n$,
as every vertex belonging to the line graph of $G'$ has degree four and the two additional vertices in each $H_{uv}$ have degree two.

As $G'$ is triangle-free, there is a one-to-one correspondence between
triangles in $H$ and vertices in $G'$. Moreover, two vertices in $G'$
are adjacent if and only if their corresponding triangles in $H$ have
a common vertex.  Thus a maximum stable set of size $361n$ (resp.,
$362n$) in $G'$ translates into a maximum number of $361n$ (resp.,
$362n$) vertex-disjoint triangles in $H$.  It is known that finding
the maximum number of vertex-disjoint triangles in a graph is
APX-hard~\cite{caprara02}. However, no explicit lower bound was known.


\begin{theorem}
	It is NP-hard to approximate the maximum number of vertex-disjoint triangles in a graph with a factor of $\left(\frac{362}{361}-\epsilon\right)$ for arbitrarily small constant $\epsilon>0$. The result holds even for graphs with maximum degree four.
\end{theorem}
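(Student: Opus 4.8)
The plan is to chain together the reductions assembled above and then read off the resulting gap. First I would recall the starting point: combining Håstad's inapproximability for systems of linear equations mod $2$ with the Berman--Karpinski construction, it is NP-hard to distinguish a $3$-regular graph $G$ on $176n$ vertices that has a stable set of size $\ge(1-\epsilon)98n$ from one in which every stable set has size $\le(1+\epsilon)97n$, for every fixed $\epsilon>0$. Next I would push this through the double-subdivision gadget of the previous paragraphs: replacing every edge of $G$ by a path of length three yields a triangle-free graph $G'$ whose stable-set number increases by exactly the additive constant $264n$, so the promise problem now reads ``decide whether $G'$ has a stable set of size $\ge(1-\epsilon)98n+264n$ or $\le(1+\epsilon)97n+264n$.''

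The heart of the argument is then the line-graph gadget $H$. Since $G'$ is triangle-free, the triangles of $H$ are in one-to-one correspondence with the vertices of $G'$, and two vertices of $G'$ are adjacent precisely when their corresponding triangles in $H$ share a vertex; hence a family of pairwise vertex-disjoint triangles in $H$ is exactly a stable set of $G'$, and the maximum triangle-packing number of $H$ equals $\alpha(G')$. Combining this with the previous paragraph, it is NP-hard to distinguish whether $H$ admits at least $(1-\epsilon)98n+264n$ disjoint triangles or at most $(1+\epsilon)97n+264n$. A $\rho$-approximation algorithm for vertex-disjoint triangle packing would tell these two cases apart whenever $\rho<\frac{(1-\epsilon)98n+264n}{(1+\epsilon)97n+264n}$, and this ratio tends to $\frac{362n}{361n}=\frac{362}{361}$ as $\epsilon\to0$; choosing $\epsilon$ sufficiently small gives NP-hardness of approximation within $\frac{362}{361}-\epsilon'$ for every fixed $\epsilon'>0$. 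For the degree bound I would just invoke the construction above: every vertex of $H$ inherited from the line graph of $G'$ has degree exactly four in $H$, and the two vertices added in each gadget $H_{uv}$ have degree two, so $H$ has maximum degree four.

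The only genuinely non-bookkeeping steps are the two combinatorial correspondences, both already sketched above but needing to be written out carefully: that the double subdivision changes the stable-set number by the \emph{exact} additive amount $264n$ --- in the ``$\le$'' direction one uses the swap $S'\mapsto S'\setminus\{v\}\cup\{p_2\}$ to ensure that no subdivided edge keeps both of its original endpoints, while the ``$\ge$'' direction is the explicit extension choosing $p_2$ or $p_1$ --- and that the gadget of Figure~\ref{fig:edge_dual} creates no triangles other than the ``vertex'' triangles, so that triangle packings in $H$ coincide with stable sets of $G'$ and adjacency in $G'$ matches triangle intersection in $H$. Once these are verified the theorem is immediate, since everything that remains is elementary arithmetic with the constants $97$, $98$, $176$ and $264$.
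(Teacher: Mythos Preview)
Your proposal is correct and follows essentially the same line as the paper: you chain the Berman--Karpinski gap through the double-subdivision $G\to G'$ (stable-set number shifts by exactly $264n$), then through the gadget graph $H$ (triangles in $H$ correspond bijectively to vertices of $G'$, with adjacency matching triangle intersection), and read off the resulting $362/361$ gap and the degree bound. The only difference is cosmetic: the paper absorbs the $\epsilon$ into the constants and simply writes $361n$ versus $362n$, whereas you carry $(1\pm\epsilon)$ through and take the limit at the end; both yield the same conclusion.
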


\begin{figure}
	\centering
	\import{./}{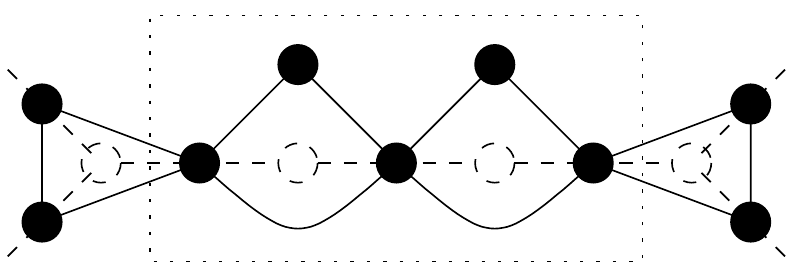_t}
	\caption{Construction of graph $H$ (solid) from $G'$ (dashed) and definition of $H_{uv}$ (dotted).}
	\label{fig:edge_dual}
\end{figure}

Now, consider \OCP in $H$, \ie, finding the maximum number of vertex-disjoint odd cycles.
We prove that there is always an optimal solution consisting of triangles only. Assume the contrary, \ie, the optimal solution
includes a cycle $C$ of length at least $5$.
We distinguish three cases. \\
First, $C$ does not contain any vertex other than those of type $x$ and $y$ (see Figure~\ref{fig:edge_dual}). Then, $C$ must be a triangle.
Second, $C$ is fully contained in some $H_{uv}$. One easily checks that also in this case $C$ must be a triangle.
Third, $C$ is not contained in any $H_{uv}$.
Then $C$ has to pass through the node $z$ of some $H_{uv}$ and hence the solution does not include any triangle in such $H_{uv}$.
Substitute $C$ by any of the two triangles in $H_{uv}$ to obtain another optimal solution to \OCP.
Hence there is an optimal solution to \OCP in $H$ only containing triangles.
Therefore we have the following hardness result.

\begin{corollary}
	It is NP-hard to approximate \OCP
	with a factor of $\left(\frac{362}{361}-\epsilon\right)$ for arbitrarily small constant $\epsilon>0$. The result holds even for graphs with maximum degree four.
\end{corollary}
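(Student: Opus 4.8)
The plan is to reduce the vertex-disjoint triangle packing problem of the preceding Theorem to \OCP by showing that, on the graph $H$ constructed there, $\ocp(H)$ equals the maximum number of vertex-disjoint triangles in $H$. Since the construction of $H$ from the Berman--Karpinski graph runs in polynomial time, has maximum degree four, and produces a triangle-packing gap of $362n$ versus $361n$, this equality immediately transfers the $\left(\frac{362}{361}-\epsilon\right)$-inapproximability to \OCP.

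One inequality is immediate: a triangle is an odd cycle, so every vertex-disjoint triangle packing is a feasible \OCP solution, and hence $\ocp(H)$ is at least the triangle-packing number. For the converse I would start from an optimal \OCP solution on $H$ and argue that it can be transformed, without decreasing its cardinality, into one using triangles only; iterating this transformation finishes the argument. So suppose the solution contains a cycle $C$ with $|C|\geq 5$ and consider how $C$ meets the gadgets $H_{uv}$ glued onto the line graph of $G'$. Using that $G'$ is triangle-free (so the only triangles of $H$ are those created at the degree-three vertices of $G'$ and inside the gadgets) and that each $H_{uv}$ communicates with the rest of $H$ only through two ``port'' vertices, one handles three cases: (a) if $C$ avoids the gadget-internal vertices it is confined to a part of the graph whose only cycles are triangles, contradicting $|C|\geq 5$; (b) if $C\subseteq H_{uv}$ for a single pair $uv$, the local structure again forces $|C|=3$; and (c) otherwise $C$ enters some $H_{uv}$ through one port and must leave through the other, which forces it through the cut vertex $z$ of $H_{uv}$, so no triangle of this $H_{uv}$ is used by the packing, and we may delete $C$ and insert one of the two triangles of $H_{uv}$, obtaining an optimal solution with strictly fewer long cycles.

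Granting the equality $\ocp(H)=$ (triangle-packing number of $H$), I would conclude by quoting the Theorem: it is NP-hard to tell $\ocp(H)\geq 362n$ from $\ocp(H)\leq 361n$, so no polynomial-time algorithm approximates \OCP within any constant factor below $\frac{362}{361}$ unless $P=NP$. The degree bound carries over because the line graph of a graph of maximum degree three has maximum degree four, and the vertices added inside the $H_{uv}$ have degree two while the ports are raised only to degree four.

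The step I expect to be the main obstacle is case (c): one must check from the structure of $H_{uv}$ that a long cycle leaving the gadget is genuinely forced through $z$, and that the triangle swapped in cannot clash with the remaining cycles of the packing --- equivalently, that the two ports together with $z$ are the only vertices of $H_{uv}$ that a ``foreign'' odd cycle can ever touch. The remaining ingredients --- the vertex counts $704n$, $792n$ and $1320n$, the bijection between triangles of $H$ and vertices of $G'$, and the adjacency correspondence --- are routine and have already been recorded in the run-up to the Theorem.
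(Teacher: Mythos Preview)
Your proposal follows essentially the same three-case argument as the paper: (a) a long odd cycle using only ``port'' vertices is impossible since those vertices split into disjoint triangles; (b) a long odd cycle confined to a single $H_{uv}$ is impossible by inspection; (c) otherwise the cycle passes through the articulation vertex $z$ of some $H_{uv}$, freeing a triangle there for a swap.

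One small correction to your anticipated verification of~(c): it is \emph{not} true that a foreign odd cycle can only touch the two ports and $z$ --- it may also pass through the degree-two added vertices, for instance along $x$--$e_1$--$z$--$y$ or $x$--$e_1$--$z$--$e_2$--$y$. So the ``equivalently'' you propose is false. What actually makes the swap safe is that any such cycle necessarily contains \emph{both} ports $x,y$ as well as $z$ (since $z$ is a cut vertex of $H_{uv}$ separating $x$ from $y$, and each added vertex has both neighbours in $\{x,z\}$ or $\{z,y\}$); hence after deleting $C$ every vertex of either triangle of $H_{uv}$ is either already freed or has all its neighbours freed, and so cannot lie on another cycle of the packing. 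The paper glosses over this point as well, so your instinct to flag it was right; the fix is just slightly different from what you wrote.
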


\subsection{From \OCP to \MVD and \MVS}


Consider the node-edge incidence matrix $A$ of $H$: this is a $1320n\times 2112n$ matrix.
From what was argued above, the maximum number of vertex-disjoint odd cycles of $361n$ (resp., $362n$)
translates into subdeterminants of size $2^{361n}$ (resp., $2^{362n}$). This allows us to prove a hardness of $2^n$ when the dimension is $d=1320n$.

\begin{theorem}
	It is NP-hard to approximate \MVD with a factor of $\left(2^{1/1320}-\epsilon\right)^d$ for arbitrarily small constant $\epsilon>0$.
	The hardness even holds when restricted to node-edge incidence matrices of graphs with maximum degree four.
\end{theorem}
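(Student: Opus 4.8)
The plan is to transfer the hardness of \OCP to \MVD through the equality $\Delta_{\max}(A_G)=2^{\ocp(G)}$ for node-edge incidence matrices, applied to the graphs $H$ built in the previous subsection.

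I would start from a graph $H$ produced by the Berman--Karpinski reduction: it has $1320n$ vertices, maximum degree four, and, as already shown, $\ocp(H)$ coincides with the maximum size of a vertex-disjoint triangle packing in $H$, which is at least $(1-\epsilon)98n+264n$ on YES instances and at most $(1+\epsilon)97n+264n$ on NO instances. Let $A=A_H\in\Q^{1320n\times 2112n}$ be its node-edge incidence matrix and set $d=1320n$. Before invoking \MVD one must check that $A$ is admissible, \ie, has full row rank: this holds because $H$ is connected (which we may assume, and which is the case for the graphs arising from the Berman--Karpinski construction, since subdividing edges, passing to the line graph, and attaching the degree-two gadget vertices all preserve connectivity) and non-bipartite, as $H$ contains a triangle; hence the rank of $A$ over $\Q$ equals $|V(H)|=d$, and $A$ is a legitimate \MVD instance of dimension $d$.

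Next I would combine $\Delta_{\max}(A)=2^{\ocp(H)}$ with the gap above: it is NP-hard to decide whether $\Delta_{\max}(A)\ge 2^{(1-\epsilon)98n+264n}$ or $\Delta_{\max}(A)\le 2^{(1+\epsilon)97n+264n}$. The ratio of the two thresholds is $2^{n(1-195\epsilon)}=\bigl(2^{(1-195\epsilon)/1320}\bigr)^{d}$, using $n=d/1320$. Given any target $\epsilon'>0$, choose $\epsilon>0$ small enough that $2^{(1-195\epsilon)/1320}>2^{1/1320}-\epsilon'$; this is possible because the left-hand side is continuous in $\epsilon$ and tends to $2^{1/1320}$ as $\epsilon\to0$. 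A hypothetical polynomial-time $\bigl(2^{1/1320}-\epsilon'\bigr)^{d}$-approximation algorithm for \MVD, run on $A$, would then separate the YES and NO cases of this \OCP gap problem, contradicting NP-hardness. Since every instance $A$ used is the node-edge incidence matrix of a graph of maximum degree four, the restricted version of the statement follows at once.

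The argument is essentially a substitution through $\Delta_{\max}=2^{\ocp}$, so I expect no deep obstacle; the one spot that genuinely needs attention is the full-row-rank hypothesis of \MVD, which is why I handle it explicitly, the remaining work being careful bookkeeping of the constants $97,98,176,264,1320$ coming out of the chain of reductions.
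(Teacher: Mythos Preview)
Your argument is correct and follows the same route as the paper: reduce from the \OCP gap on $H$ via $\Delta_{\max}(A_H)=2^{\ocp(H)}$, then convert the additive gap $n(1-O(\epsilon))$ in the exponent into the multiplicative factor $(2^{1/1320}-\epsilon')^d$ using $d=1320n$. Two small remarks: your coefficient $195\epsilon$ is actually tighter than the paper's $723\epsilon$ (the paper folds the $264n$ shift into the $(1\pm\epsilon)$ factor, you do not), but this is immaterial since $\epsilon$ is arbitrary; and your explicit verification that $A_H$ has full row rank---via connectivity and non-bipartiteness of $H$---fills a gap the paper silently skips.
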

\begin{proof}
	Due to the above construction of $H$, it is NP-hard to distinguish between the case when $\ocp(H)\le (1+\epsilon) 361n$ and $\ocp(H)\ge (1-\epsilon)362n$
	for arbitrarily small constant $\epsilon>0$. Hence, for \MVD on $A$ we obtain a gap of
	\begin{align*}
		2^{(1-\epsilon) 362n - (1+\epsilon) 361n} = 2^{n(1-723\epsilon)} = \left(2^{1-\epsilon'}\right)^{\frac{1}{1320}d} = \left(2^{1/1320}-\epsilon''\right)^d,
	\end{align*}
where $\epsilon'=723\epsilon$ and $\epsilon''=2^{1/1320} \left(1-2^{-\frac{\epsilon'}{1320}}\right)>0$.
\end{proof}

We now derive a similar inapproximability result for \MVS.

\begin{corollary}
	It is NP-hard to approximate \MVS~with a factor of $\left(2^{1/1320}-\epsilon\right)^d$ for arbitrarily small constant $\epsilon>0$.
\end{corollary}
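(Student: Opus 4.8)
The plan is to reduce \MVD to \MVS in an approximation-preserving way, so that the $\left(2^{1/1320}-\epsilon\right)^d$ hardness just established for \MVD transfers directly. Given an instance $A\in\Q^{d\times n}$ of \MVD, I want to produce a set of points in some $\Q^{d'}$ whose largest contained simplex has volume equal (up to an explicit, instance-independent scaling factor) to $\Delta_{\max}(A)$, and with $d'=\Theta(d)$ so the exponent degrades only by a constant.

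The natural construction is the one already hinted at in the introduction: the volume of the simplex $\conv\{0,v_1,\dots,v_d\}$ is $|\det(v_1,\dots,v_d)|/d!$, so a $d\times d$ subdeterminant of $A$ is, up to the factor $d!$, the volume of a simplex spanned by the origin and $d$ columns of $A$. So first I would pass to the symmetric point set $\{\pm a_1,\dots,\pm a_n,0\}$ (equivalently work inside $\A=\conv\{\pm a_1,\dots,\pm a_n\}$), where a largest-volume simplex with one vertex at the origin has volume $\Delta_{\max}(A)/d!$. The remaining gap is that an optimal \MVS solution need not place a vertex at the origin, and its vertices need not be original columns — they may be arbitrary points of $\conv\{\pm a_i\}$. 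I would handle this the standard way: any simplex contained in $\conv\{\pm a_i\}$ can be replaced, vertex by vertex, by a simplex of at least the same volume whose vertices are among the $\pm a_i$ (push each vertex to an extreme point maximizing the distance from the affine hull of the others), so the optimum of \MVS on this point set is exactly $\max$ over $(d+1)$-subsets of $\{\pm a_i\}$ of $|\det|/d!$. To force the origin to be (essentially) a vertex, I would lift: replace each $a_i\in\Q^d$ by $(a_i,1)\in\Q^{d+1}$ and $-a_i$ by $(-a_i,1)$, or alternatively add the point $0\in\Q^{d+1}$ together with scaled copies, so that in dimension $d+1$ a simplex on $d+2$ of these lifted points has volume proportional to a $d\times d$ subdeterminant of $A$. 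Either way one gets $\OPT_{\MVS}=\Delta_{\max}(A)\cdot\gamma$ for an explicit $\gamma$ depending only on $d$, hence a $\beta$-approximation for \MVS yields a $\beta$-approximation for \MVD on the same dimension (up to $+1$).

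Putting it together: the \MVD hardness gives, for $d=1320n$, NP-hardness of distinguishing $\Delta_{\max}\ge 2^{(1-\epsilon)362n}$ from $\Delta_{\max}\le 2^{(1+\epsilon)361n}$ on node-edge incidence matrices. Feeding these matrices through the reduction above produces \MVS instances in dimension $d+1$ with the same multiplicative gap $2^{n(1-723\epsilon)}$, which as in the proof of the previous theorem is $\left(2^{1/1320}-\epsilon''\right)^{d'}$ after absorbing the $+1$ into $\epsilon''$ and renaming. This establishes the corollary.

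The step I expect to be the main (though still routine) obstacle is making the reduction genuinely approximation-preserving rather than merely optimum-preserving, i.e. arguing that an arbitrary simplex in the convex hull — with possibly irrational vertices not among the input points and not touching the origin — can be converted to one spanned by input points through the origin without losing more than a controlled factor. The vertex-pushing argument is standard, but one must check it does not introduce a dimension-dependent loss that would spoil the constant base of the exponent; choosing the lift carefully (so that the origin is forced) is what keeps the conversion exact up to the global scalar $\gamma$. One also has to confirm that the resulting point set still has $n'=\Theta(d')$ points, so that the companion $\bar c\,^d$-approximation of item~\eqref{item:8} applies and the hardness is tight for \MVS as well.
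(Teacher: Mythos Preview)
Your overall strategy --- reduce \MVD to \MVS by feeding in the columns of $A$ together with the origin, and absorb any polynomial-in-$d$ slack into the exponential gap --- is exactly what the paper does. Where you diverge is in the handling of the issue that an optimal (or approximate) simplex need not use the origin as a vertex. The paper does not try to force this at all: it simply runs the hypothetical \MVS approximation on $\{0,a_1,\dots,a_n\}$, takes the returned simplex $S$, triangulates $\conv(S\cup\{0\})$ into the $d+1$ simplices through $0$, and picks the largest. This loses at most a factor $d+1$, so an $\alpha(d)$-approximation for \MVS yields an $\alpha(d)\,(d+1)$-approximation for \MVD, which is enough.

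Your lift, by contrast, does not deliver the exact correspondence you claim. If you send $\pm a_i\mapsto(\pm a_i,1)$ and add $0\in\R^{d+1}$, then every full-dimensional simplex in the resulting cone must use $0$ as a vertex (good), but its volume is $\tfrac{1}{d+1}$ times the $d$-volume of a simplex on $d+1$ of the points $\pm a_i$ --- i.e.\ an \MVS value in dimension $d$, \emph{not} a $d\times d$ subdeterminant of $A$. So you have not escaped the problem, only pushed it down one dimension; you would still need the triangulation (or an equivalent argument) to pass from that simplex volume to a subdeterminant, incurring the same $O(d)$ loss. The assertion ``$\OPT_{\MVS}=\Delta_{\max}(A)\cdot\gamma$ for an explicit $\gamma$ depending only on $d$'' is therefore not correct as stated: there is no such universal $\gamma$, only a sandwich $\Delta_{\max}/d!\leq\OPT_{\MVS}\leq (d+1)\,\Delta_{\max}/d!$. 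Once you accept that sandwich, the symmetrization to $\pm a_i$ and the lift are both superfluous; the paper's direct argument on $\{0,a_1,\dots,a_n\}$ is shorter and loses the same factor. Your vertex-pushing remark (replacing the returned simplex by one with vertices among the input points, at no loss in volume) is correct and is actually needed to make the paper's proof literally go through, since otherwise ``the submatrix $A'$ associated to the non-zero vertices of $S'$'' is not well-defined.
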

\begin{proof}
We show that, if we were able to solve \MVS on $n$ vectors in $\Q^d$ with an approximation factor $\alpha(d)$, then we would be able to solve \MVD with input $A \in \Q^{d \times n}$ with an approximation factor of $\alpha(d) \cdot (d+1)$.\footnote{It is not difficult to prove that $d+1$ can be replaced by $d$. As this is not crucial for our proof, we leave the details to the interested reader.}
As we proved that \MVD is inapproximable up to a factor $c^d$ for some constant $c>1$ unless $P=NP$, we conclude an inapproximability for \MVD of $c^{d}/(d+1)=\Omega(\hat c^d)$ for any $\hat c<c$, unless $P=NP$.

Without loss of generality, let $a_1,\dots, a_d$ be an optimal solution to \MVD. Consider the \MVS instance with input $0,a_1,\dots,a_n$, and let $S$ be the simplex output by the algorithm. Note that $\conv\{0,a_1,\dots, a_d\}$ is a feasible solution to \MVS, hence $|\det (a_1,\dots,a_d)| / d! \leq \alpha(d) \cdot \vol(S)$. Now consider the triangulation of $\conv\{S,\{0\}\}$ into simplices $S_1,\dots, S_{d+1}$ containing the origin and $d$ of the vertices of $S$. We obtain
$$ \vol(S) \leq \vol(\conv\{S,\{0\}\}) = \sum_{i=1}^{d+1} \vol(S_i) \leq (d+1) \vol (S'),$$
where $S'$ is the simplex among $S_1,\dots, S_{d+1}$ of maximum volume. Note moreover that the submatrix $A'$ associated to the non-zero vertices of $S'$ is a feasible solution to the original \MVD problem. We deduce
$$\frac{|\det (a_1,\dots,a_d)|}{d!} \leq \alpha(d) \cdot \vol(S) \leq \alpha(d) \cdot (d+1) \cdot \vol(S') = \alpha(d) \cdot (d+1) \frac{|\det(A')|}{d!}.$$
Hence, we can output $A'$ and obtain the required approximation.
\end{proof}

\begin{remark} We remark that the construction of $G'$ and $H$ in
  Section \ref{sec:ss-to-ocp} can be improved. In fact, we do not need
  to subdivide \emph{every} edge of $G$ twice. It is sufficient to
  subdivide edges so that for every vertex of $G$, two of its incident
  edges are subdivided twice.  Hence, we can leave a maximum matching
  of $G$ untouched. Since we can compute a maximum matching in $G$ in
  polynomial time and every 3-regular graph of $\ell$ vertices has a
  matching of size $\frac{7}{16}\ell$~\cite{hobbs82}, we obtain the
  following slight improvements:

For every $\epsilon >0$, it is NP-Hard to approximate
\begin{itemize}
\item the maximum number of vertex-disjoint triangles in a graph and \OCP within a factor of $\left(\frac{285}{284}-\epsilon\right)$;
\item \MVD and \MVS with a factor of $\left(2^{1/1012}-\epsilon\right)^d$.
\end{itemize}
\end{remark}

\subsection{Tightness for instances with $n = O(d)$}

If the number of columns is linear in the number of rows, then a
better approximation result than that of Theorem~\ref{thr:2} is possible for \MVS and \MVD.
The next theorem is a
consequence of a recent result of Despande and
Rademacher~\cite{deshpande10}. These authors have shown how to
randomly sample from the set of bases of a given matrix $A$ such that the
probability of sampling a particular basis $B$ is proportional to
$\det(A_B^TA_B)$. In fact,  their algorithm is more general as it can
also handle $k$-subsets of linearly independent columns. Assume now
that $n \leq \alpha \cdot d$ with some $\alpha \in \N$. Then the
number of bases $|\B|$ is bounded by $\binom{\alpha \cdot d}{d} \leq (e \cdot
\alpha)^d$. We deduce
\begin{displaymath}
	\mathbb{E}(|\det(A_B)|) \cdot (e \alpha)^d \geq \Dm.
\end{displaymath}
The
claimed result then follows by repeated sampling and picking the
largest basis.

\begin{theorem}
  \label{thr:3}
  If $n = O(d)$, then there exists a randomized algorithm for \MVD
  with approximation ratio $\bar{c}^d$ for some constant $\bar{c}$
  that depends on the constant in the $O$-notation.
\end{theorem}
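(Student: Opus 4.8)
The plan is to run the \emph{volume sampling} algorithm of Deshpande and Rademacher~\cite{deshpande10} and to exploit the fact that, when $n = O(d)$, the matrix $A$ has only $2^{O(d)}$ bases, so that a single sample already hits a near-optimal one with constant probability.

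First I would make the hypothesis concrete: there are constants $c_0, d_0$ with $n \le c_0 d$ for all $d \ge d_0$. For the finitely many dimensions $d < d_0$ one solves \MVD exactly by enumerating the at most $\binom{n}{d} = 2^{O(1)}$ bases, so assume $d \ge d_0$ and put $\alpha \coloneqq \lceil c_0\rceil \in \N$, so $n \le \alpha d$ (note $n \ge d$, hence $\alpha \ge 1$). Since $A$ has full row rank $d$, every basis $B$ has $|B| = d$, hence $A_B \in \Q^{d\times d}$ is square and $\det(A_B^{\transpose}A_B) = \det(A_B)^2$. Thus the Deshpande--Rademacher procedure (the case of $d$-subsets) runs in time polynomial in $n,d$ and returns a basis $B$ with $\Pr[B = B'] = \det(A_{B'})^2 / \sum_{B''\in\mathscr{B}}\det(A_{B''})^2$ for every $B' \in \mathscr{B}$.

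Next I would bound the number of bases by $|\mathscr{B}| \le \binom{\alpha d}{d} \le (e\alpha)^d$ and set $\bar c \coloneqq e\alpha$. Fix $B^{\ast}$ attaining $\Dm$, so that $\sum_{B''}\det(A_{B''})^2 \ge \Dm^{\,2}$, and consider the ``bad'' set $S = \{B' \in \mathscr{B} : |\det(A_{B'})| < \Dm/\bar c^{\,d}\}$. Then
\begin{displaymath}
  \Pr[\,B \in S\,] \;=\; \frac{\sum_{B'\in S}\det(A_{B'})^2}{\sum_{B''\in\mathscr{B}}\det(A_{B''})^2}
  \;\le\; \frac{|\mathscr{B}|\cdot \Dm^{\,2}/\bar c^{\,2d}}{\Dm^{\,2}}
  \;\le\; \frac{\bar c^{\,d}}{\bar c^{\,2d}} \;=\; \bar c^{\,-d} \;<\; \tfrac12 .
\end{displaymath}
Hence one sample returns a basis $B$ with $|\det(A_B)| \ge \Dm/\bar c^{\,d}$ with probability greater than $1/2$; taking the best of $k$ independent samples pushes the failure probability below $2^{-k}$, and $k = \Theta(d)$ keeps the total running time polynomial. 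This is the desired $\bar c^{\,d}$-approximation for \MVD, with $\bar c = e\lceil c_0\rceil$ depending only on the constant hidden in $n = O(d)$; the analogous statement for \MVS follows because an \MVS instance on $n = O(d)$ points reduces to $n$ \MVD instances, each on a $d \times (n-1)$ matrix with $O(d)$ columns.

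The only point requiring care — and the sole substance beyond invoking~\cite{deshpande10} — is that the expectation inequality $\mathbb{E}(|\det(A_B)|)\cdot (e\alpha)^d \ge \Dm$ does not by itself yield a polynomial-time algorithm, since the probability mass could conceivably sit on an exponentially rare basis; what saves the argument is the displayed tail bound, which uses the precise ``proportional to $\det^2$'' shape of the sampling law, not merely its mean (and which, fortunately, drops out of the same counting estimate). Reducing to an integer $\alpha$, the estimate $\binom{\alpha d}{d} \le (e\alpha)^d$, and the probability amplification are all routine.
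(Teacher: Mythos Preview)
Your proof is correct and follows exactly the paper's approach: invoke Deshpande--Rademacher volume sampling together with the count $|\mathscr{B}|\le\binom{\alpha d}{d}\le(e\alpha)^d$. In fact your argument is more complete than the paper's own sketch, which records only the expectation inequality $\mathbb{E}(|\det(A_B)|)\cdot(e\alpha)^d\ge\Dm$ and asserts that repeated sampling suffices; your tail bound $\Pr[B\in S]\le \bar c^{-d}$, exploiting the $\det^2$-proportional law rather than just its mean, is precisely the step needed to turn that expectation statement into a polynomial-time guarantee.
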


An alternative proof of this assertion relies on the fact that a
random point in the zonotope $\{y = A \cdot x \colon 0 \leq x \leq
1\}$ can be efficiently
sampled~\cite{JACM::DyerFK1991,MR2205290,lovasz1993random,lovasz2006hit}. It is folklore
that a zonotope can be partitioned into parallelepipeds generated by the
bases of $A$, \ie, each parallelipiped can be mapped one-to-one to a
basis. We can then identify the parallelepiped where the sampled point
resides. This results in sampling each basis with a probability
proportional to its determinant. We then obtain an estimation of
$\mathbb{E}(|\det(A_B)|)$, as required.

\small

\end{document}